\documentclass[twoside, a4paper,10pt]{article}
\usepackage{amssymb}
\usepackage{IEEEtrantools}
\usepackage[mathscr]{eucal}
\usepackage[dvips]{graphicx}
\usepackage{amsmath}
\usepackage{amsthm}
\usepackage{pstricks}
\usepackage{caption}
\usepackage{esint}
\usepackage{cite}
\usepackage{cancel}
\usepackage{tikz,bm}
\usetikzlibrary{arrows}

\usetikzlibrary{arrows.meta}
\def \ni{\noindent}

\newcommand{\be}{\begin{equation}}
\newcommand{\ee}{\end{equation}}
\newcommand{\ben}{\begin{equation*}}
\newcommand{\een}{\end{equation*}}
\newcommand{\bes}{\begin{eqnarray}}
\newcommand{\ees}{\end{eqnarray}}
\newcommand{\besn}{\begin{IEEEeqnarray*}{rCl}}
\newcommand{\eesn}{\end{IEEEeqnarray*}}

\newcommand{\txt}{\textrm}

\newcommand{\tr}{\txt{Tr}}

\newtheorem*{theorem*}{Theorem}

\newtheorem*{definition*}{Definition}
\newtheorem*{lemma*}{Lemma}
\newtheorem*{prop*}{Proposition}
\newtheorem*{corollary*}{Corollary}

\DeclareFontFamily{U}{mathx}{}
\DeclareFontShape{U}{mathx}{m}{n}{<-> mathx10}{}
\DeclareSymbolFont{mathx}{U}{mathx}{m}{n}
\DeclareMathAccent{\widecheck}{0}{mathx}{"71}

\title{A Concentration of Measure Phenomenon in Lattice Yang-Mills}
\author{T. Tlas}
\date{}

\begin{document}

\maketitle
\thispagestyle{empty}

\begin{abstract}
\ni We demonstrate that the pushforward of the product of Haar measures by the lattice Yang-Mills action concentrates as a Gaussian. It is also sketched how using this fact, one can recover the strong-coupling expansion. \newline
\end{abstract}

The aim of this note is to explore the concentration of measure phenomenon \cite{ledoux} in the context of lattice Yang-Mills theory. We shall see below that while this phenomenon does indeed manifest in this setting, it does not allow us to go beyond what is already known by using other methods. This is due to the fact that, in a certain sense which will become clear below, measure concentration and action minimization work against each other. Nevertheless, the theorem demonstrated below, even though it does not lead to new results in the current setting, is sufficiently interesting and instructive to be presented as the same methods can be (and in fact are) more successful in other situations. \newline

Let us begin by describing our setup. We shall deal with the Euclidean, lattice Yang-Mills theory in $D$ dimensions with periodic boundary conditions. For concreteness, we work with the $U(N)$ group, but essentially everything below can be trivially adapted to the other groups of physical interest. The lattice sites are labelled by $x$. The positive edge directions are labelled by greek letters. The notation $x + \mu$ stands for the lattice site which is displaced from $x$ one step in the direction defined by $\mu$. The number of lattice sites is denoted by $K$. We attach to every edge at each site an element $U_{x, \mu} \in U(N)$, and are interested in performing the following integral

\besn
Z &  =  &  \int dU e^{- \beta S(U)} \\
& = &  \int \prod_{x, \mu} d U_{x, \mu} \exp \Bigg  [  \frac{2 N^2}{\lambda} \sum_{\{\mu, \nu\}} \frac{1}{N} \Re \Big ( \tr \big ( U^\dagger_{x, \nu} U^\dagger_{x  + \nu, \mu} U_{x +\mu, \nu} U_{x, \mu}  \big )  \Big ) \Bigg ] ,
\eesn

where the sum goes over all (unordered) pairs of $\mu$ and $\nu$ such that $\mu \neq \nu$, or equivalently, over all plaquettes of the lattice. The $dU$'s stand for the Haar measures and $\lambda$ is the 't Hooft coupling. \newline

We want to find the asymptotic of the expression above in the limit $N \to \infty$ while holding $\lambda$ fixed. We shall do so by pushing forward the product of Haar measures to $\mathbb{R}$ via the function 

\be
\label{eq:action}
t(U_{x, \mu}) = \frac{1}{K} \sum_{\{\mu, \nu\}}  \frac{1}{N} \Re \Big ( \tr \big ( U^\dagger_{x, \nu} U^\dagger_{x  + \nu, \mu} U_{x +\mu, \nu} U_{x, \mu}  \big )  \Big ).
\ee

We thus get that 

\be
\label{eq:partition}
Z = \int e^{\frac{2 N^2}{\lambda} K t} d \rho_N[t],
\ee

where $\rho_N$ is the pushforward measure under the function above. What can we say about this measure? The key fact is furnished by the following

\begin{theorem*}
As $N \to \infty$ the rescaled measures $\rho_N(N t)$ converge weakly to the measure $ \sqrt{\frac{2 K}{\pi D (D-1)}} e^{-  \frac{ 2 K}{D (D-1) } t^2} dt$.
\end{theorem*}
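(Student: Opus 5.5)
The plan is to reduce the statement to a joint central limit theorem for traces of words in independent Haar unitaries. Write $U_p$ for the ordered product of link variables around a plaquette $p$, so that
\[
N t = \frac{1}{K} \sum_{p} \Re \, \tr (U_p),
\]
where the sum runs over the $P = K D(D-1)/2$ plaquettes. The number $K$ is fixed while $N \to \infty$. The claim is then equivalent to
\[
\frac{1}{K}\sum_p \Re \, \tr(U_p) \;\Rightarrow\; \mathcal{N}\!\left(0, \tfrac{D(D-1)}{4K}\right).
\]
Indeed, a centered Gaussian of variance $\sigma^2 = D(D-1)/(4K)$ has density $(2\pi\sigma^2)^{-1/2} e^{-t^2/(2\sigma^2)} = \sqrt{\frac{2K}{\pi D(D-1)}}\, e^{-\frac{2K}{D(D-1)}t^2}$, which is the stated limit. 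So it suffices to show two things:
\begin{itemize}
\item each $\tr(U_p)$ converges to a standard complex Gaussian $Z_p$, meaning $E Z_p = E Z_p^2 = 0$ and $E|Z_p|^2 = 1$, so that $\Re Z_p$ has variance $1/2$;
\item the $Z_p$ for distinct plaquettes are asymptotically independent.
\end{itemize}
Granting these, the sum of the real parts has variance $P/2$, and after dividing by $K$ the variance is $P/(2K^2) = D(D-1)/(4K)$, as required.

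For a single plaquette, $U_p$ is itself Haar distributed. This follows by the invariance of Haar measure under the change of variables $U_{x,\mu} \mapsto U_{x,\mu} V$ on one of its edges. The Diaconis--Shahshahani theorem then gives $\tr(U_p) \to Z_p$ with convergence of all moments. For the joint statement I will use the method of moments, since the variables are bounded for each $N$ and the Gaussian limit is determined by its moments. Concretely, I will show that for any finite collections of plaquettes
\[
E\Big[\prod_i \tr(U_{p_i}) \prod_j \overline{\tr(U_{q_j})}\Big] \longrightarrow \#\big\{\text{bijections } i \mapsto j \text{ with } p_i = q_j\big\},
\]
which is exactly the Wick formula for independent standard complex Gaussians. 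Convergence of the real linear combination $\frac1K\sum_p \Re\,\tr(U_p)$ then follows by the continuous mapping theorem.

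The main obstacle is this mixed-moment computation. The plaquettes share edges, so the $U_p$ are not independent at finite $N$. The tool I would invoke is the known result on traces of words in independent Haar unitaries, either via Weingarten calculus or second-order freeness (Rădulescu; Mingo--Śniady--Speicher). It states that for cyclically reduced words $w_1,\dots,w_m$ that are not proper powers and are pairwise neither cyclically conjugate nor conjugate to each other's inverses, the vector $(\tr\, w_i)$ converges in moments to independent standard complex Gaussians. The combinatorial work is therefore to check the hypotheses for plaquette words on the periodic lattice:
\begin{itemize}
\item each plaquette is a cyclically reduced word of length $4$ using four distinct generators, so it is not a proper power;
\item two distinct plaquettes differ in their edge set, so they are not cyclic rotations of each other;
\item no plaquette is a rotation of another's inverse, since orientations are fixed by taking $\mu<\nu$.
\end{itemize}
This requires the lattice to have side length at least $3$ in each direction, so that the four edges of a plaquette are distinct links and distinct plaquettes really are distinct words; I would assume this. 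If one prefers a self-contained route, I would instead expand the mixed moments with the Weingarten formula. The leading order $N^0$ contributions come only from pairings that match each $\tr(U_{p_i})$ with an identical $\overline{\tr(U_{q_j})}$; any other pairing forces a genus or cross-word identification costing at least a factor $1/N$. Bounding these subleading terms uniformly is where the real effort lies.
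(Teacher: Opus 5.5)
Your argument is correct, but it takes a different route from the paper. The paper never isolates individual plaquette traces. It expands $\int t^l\, d\rho_N$ directly, uses only the leading-order Weingarten formula, and does the combinatorics by hand. After integration the $8l$ half-edges close into index loops, and each loop needs at least four half-edges. So there are at most $2l$ loops, and the $l$-th moment is $O(N^{-l})$. Equality forces every corner to close immediately, which happens only when the $l$ plaquettes are grouped into identical, oppositely oriented pairs. Counting those pairings, $(l-1)!!\,\bigl(2K\binom{D}{2}\bigr)^{l/2}$, gives the Gaussian moments.

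You instead place all of that combinatorics in a cited theorem: traces of pairwise inequivalent, cyclically reduced, non-power words in independent Haar unitaries are asymptotically independent standard complex Gaussians (R\u{a}dulescu; Mingo--\'Sniady--Speicher). That leaves you only the hypothesis check on plaquette words and the variance bookkeeping $P/(2K^2)=D(D-1)/(4K)$, which is right.

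Your route buys brevity and rigor at the point where the paper argues pictorially that ``any other scenario would produce cases (b) or (c)''. It also gives a stronger conclusion: joint asymptotic independence of all plaquette traces, and hence a central limit theorem for any linear combination of such loops. What it costs is self-containment. The cited theorem is itself proved by the same Weingarten genus counting the paper performs. Your fallback sketch is essentially the paper's proof, and the paper's ``at least four half-edges per loop'' bound is exactly the estimate you left as ``where the real effort lies''.

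One small correction to your hypothesis check. A plaquette word cannot be a cyclic rotation of another plaquette's inverse because the inverse uses the same four links, so only $q=p$ could occur. For $q=p$, a nontrivial element of a free group is never conjugate to its own inverse. The convention $\mu<\nu$ plays no role.
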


\begin{proof}
Let $m_l$ stand for the $l$-th moment of Gaussian above. In other words, let

\ben
m_l = 
\begin{cases}
\quad \quad 0 & \text{if} \quad  l \quad  \text{is odd}\\
 \frac{(l-1)!!}{  \big (  \frac{4 K }{D (D-1)}  \big )^{\frac{l}{2}} }  & \text{if}  \quad l  \quad \text{is even}.
\end{cases}
\een

It will be shown below that the $l$-th moment of $\rho_N(t)$ is equal to $\frac{m_l}{N^l} + O(\frac{1}{N^{l+1}})$, which of course implies that the $l$-th moment of $\rho_N(Nt)$ is $m_l + O(\frac{1}{N}) \to m_l$ as $N \to \infty$. Therefore, combining the method of moments together with the fact that the moment problem for a Gaussian is determined, the theorem will be proven. \newline

We now adapt the graphical representation common in the literature \cite{creutz} to our goals. A line segment decorated by an arrow will stand for $U_{ij}$ and an adjacent line segment with the arrow pointing in the other direction will denote $U^\dagger_{i'j'}$. Since all the expressions below will be symmetric between $U$ and $U^\dagger$, we won't need to specify which is which. An undecorated line segment will stand for a Kronecker $\delta_{ij}$. Attaching line segments to each other corresponds to multiplying the corresponding matrices in the order dictated by thinking of them as being parallel transports. See Figure \ref{fig1} for a summary of this notation. Note that we do not require all these segments to be straight or horizontal, as the only fact that matters is how they are connected to each other. \newline

\begin{figure}[h]
\begin{tikzpicture}
\draw[-{Latex}] (0, 5) -- (1, 5);
\draw (0.9,5) -- (2, 5);
\node at (1, 5.5){$U_{ij}  \, \, \txt{or} \, \, U^\dagger_{ij}$};
\draw[-{Latex[reversed]}   ](3, 5) -- (4,5);
\draw (4,5) -- (5,5);
\draw[-{Latex}   ](3, 4.5) -- (4,4.5);
\draw (3.9,4.5) -- (5,4.5);
\node at (4, 5.5) {$U_{ij} U^\dagger_{i'j'}$};
\draw (6, 5) -- (8, 5);
\node at (7, 5.5) {$\delta_{ij}$};
\draw[-Latex] (9, 5) -- (10, 5);
\draw(10, 5) -- (11, 5);
\draw[ -{Latex[open]}](11,5) -- (11, 6);
\draw (11, 5.97) -- (11, 7);
\node at (10, 4.5) {$U_{jk}$};
\node at (11.5, 6) {$U'_{ij}$};
\node at (9.7, 6) {$ \sum_j U'_{ij} U_{jk}   $     };

\end{tikzpicture}
\caption{\textit{Summary of a graphical notation used below. Note the order of the product in the sum in the rightmost expression. It is the one consistent with treating $U$ and $U'$ as being parallel transports (acting conventionally to the right).}}
\label{fig1}
\end{figure}
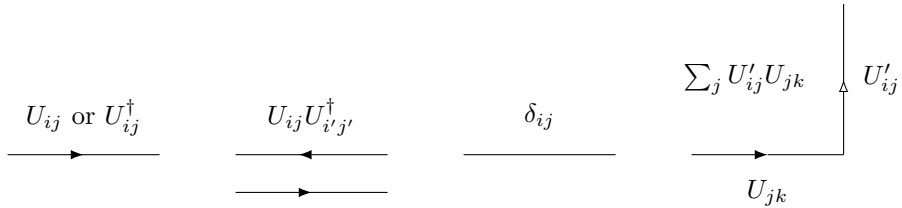

We shall deal with integrals of polynomials in the matrix entries (as well as their conjugates) with respect to the Haar measure. Such integrals are well-studied with their evaluations being known explicitly for any $N$ (see \cite{weingarten1} for a lovely overview). However, we won't have an occasion below to use anything beyond the lowest asymptotic order obtained in \cite{weingarten2}. The formula we need is

\bes
\label{eqnarray:integral}
\int dU U_{i_1 j_1} \dots U_{i_n j_n} U^*_{i'_1 j'_1} \dots U^*_{i'_m j'_m} & = & \frac{\delta_{nm}}{N^n} \sum_\sigma \delta_{i_1 i'_{\sigma(1)}} \delta _{j_1 j'_{\sigma(1)}} \dots \delta_{i_n i'_{\sigma(n)}} \delta _{j_n j'_{\sigma(n)}} \nonumber \\
& + & O(\frac{1}{N^{n+1}}),
\ees

where the sum goes over all permutations of $n$ elements. A graphical representation of the formula above for the cases $n=2$ and $n=4$ is shown in Figure \ref{fig2}. \newline

\begin{figure}[h]
\begin{tikzpicture}[scale = 0.89]

\node at (0.3,1) { $ \int dU$};
\draw[-{Latex}] (1,0) -- (1, 1);
\draw (1,0.9) -- (1,2);
\draw[-{Latex[reversed]}] (1.5,0) -- (1.5, 1);
\draw (1.5,0.9) -- (1.5,2);
\node at (2, 1) {$=$}; 
\node at (2.5, 1) {  $\frac{1}{N}$};

\draw (2.8,0) -- (2.8,0.7);
\draw (2.8, 0.7) -- (3.3, 0.7);
\draw (3.3, 0) -- (3.3, 0.7);

\draw (2.8,1.3) -- (2.8,2);
\draw (2.8, 1.3) -- (3.3, 1.3);
\draw (3.3, 1.3) -- (3.3, 2);

\node at (3.6, 1) {,};

\node at (4.3,1) { $\int dU$};
\draw[-{Latex}] (5,0) -- (5, 1);
\draw (5,0.9) -- (5,2);
\draw[-{Latex[reversed]}] (5.5,0) -- (5.5, 1);
\draw (5.5,0.9) -- (5.5,2);
\draw[-{Latex}] (6,0) -- (6, 1);
\draw (6,0.9) -- (6,2);
\draw[-{Latex[reversed]}] (6.5,0) -- (6.5, 1);
\draw (6.5,0.9) -- (6.5,2);

\node at (6.9, 1) {$=$};

\node at (7.4, 1) { $ \frac{1}{N^2} $ };
\node at (7.9, 1) {  $ \Bigg \{ $ };

\draw (8.2,0) -- (8.2,0.7);
\draw (8.2, 0.7) -- (8.7, 0.7);
\draw (8.7, 0) -- (8.7, 0.7);

\draw (8.2,1.3) -- (8.2,2);
\draw (8.2, 1.3) -- (8.7, 1.3);
\draw (8.7, 1.3) -- (8.7, 2);

\draw (8.9,0) -- (8.9,0.7);
\draw (8.9, 0.7) -- (9.4, 0.7);
\draw (9.4, 0) -- (9.4, 0.7);

\draw (8.9,1.3) -- (8.9,2);
\draw (8.9, 1.3) -- (9.4, 1.3);
\draw (9.4, 1.3) -- (9.4, 2);

\node at (9.6, 1) {$+$};

\draw  (9.8, 0) -- (9.8 , 0.9);
\draw (9.8, 0.9) -- (11. 2, 0.9);
\draw (11.2, 0) -- (11.2, 0.9);

\draw (10.2, 0) -- (10.2, 0.7);
\draw (10.2, 0.7) -- (10.8, 0.7 );
\draw (10.8, 0) -- (10.8, 0.7);

\draw  (9.8, 1.1) -- (9.8 , 2);
\draw (9.8, 1.1) -- (11. 2, 1.1);
\draw (11.2, 1.1) -- (11.2, 2);

\draw (10.2, 1.3) -- (10.2, 2);
\draw (10.2, 1.3) -- (10.8, 1.3 );
\draw (10.8, 1.3) -- (10.8, 2);

\node at (11.5, 1) { $ \Bigg \}$ };

\node at (12.6, 1 ) { $ + \, O (\frac{1}{N^3}) $};

\end{tikzpicture}

\caption{\textit{A pictorial representation of (\ref{eqnarray:integral}) in the cases $n=2$ and $n=4$. Graphically, integration over a bundle of edges amounts to cutting them into half-edges, and then linking or coupling the half edges in pairs on one side of the cut with each other, with matching coupling of the half-edges on the other side of the cut as well. Incidentally, note that in the first case, the dominant asymptotic happens to be the exact answer. There are no further corrections.}}
\label{fig2}
\end{figure}

We are now ready to compute the moments of $\rho_N$. From the definition of the $\rho_N$ we have that

\be
\label{eq:moment}
\int t^l d\rho_N(t) = \int t^l(U_{x, \mu} ) \prod_{x, \mu} dU_{x, \mu}
\ee

From (\ref{eq:action}), it is obvious that the expression above would be a sum of integrals of products of $l$ plaquettes. We thus concentrate on computing one such term. From (\ref{eqnarray:integral}), it is clear that the result is nonvanishing if every edge appears an even number of times, half of them carrying $U$ and the other half carrying $U^\dagger$. Let us now find the leading asymptotic of each such term. \newline

First, from (\ref{eq:action}), note that each plaquette comes with a factor of $\frac{1}{N}$. Thus, we have an overall factor of $\frac{1}{N^l}$. Second, from (\ref{eqnarray:integral}), we have that the integral contributes a further factor of $\frac{1}{N}$ for each pair of edges integrated over. Since there are $4l$ edges in total, we get a further factor of $\frac{1}{N^{2l}}$. Finally, again from (\ref{eqnarray:integral}), we can see that integration is going to produce products of delta functions. Since there are no free indices in the expression integrated, we will simply have a collection of traces of identity. Each of them is going to contribute a factor of $N$. It remains to count how many such traces will be produced. \newline

To this end, pick a pair of edges which have been cut with the half-edges coupled after integration, and follow one of the two pairs of the coupled half-edges. After reaching a corner, either these two edges will be coupled again, thus producing a loop, or they will coupled to different half-edges. Note that in the former case, we get a loop out of 4 half-edges, while in the latter, we will use up more than 4 half-edges to form a loop. See Figure \ref{fig3} for a pictorial clarification. \newline

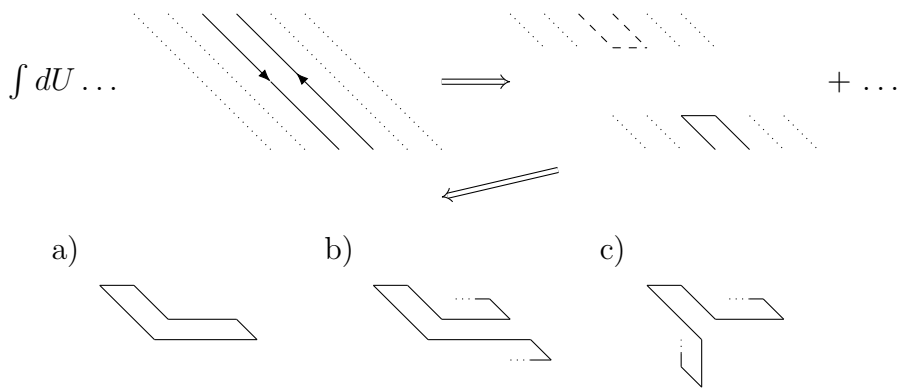
\begin{figure}[h]
\begin{tikzpicture}[scale = 0.9]

\node at (0,6) {\large $ \int dU \dots$};
\draw[dotted] (1,7) -- (3, 5);
\draw[dotted] (1.5,7) -- (3.5, 5);
\draw[-{Latex}    ] (2, 7) -- (3, 6);
\draw (3,6) -- (4, 5);
\draw[-{Latex[reversed]}    ] (2.5, 7) -- (3.5, 6);
\draw (3.5,6) -- (4.5, 5);
\draw[dotted] (3, 7) -- (5, 5);
\draw[dotted] (3.5, 7) -- (5.5, 5);

\draw[-implies,  double equal sign distance] (5.5,6) -- (6.5,6);

\draw[dotted] (6.5, 7) -- (7, 6.5);
\draw[dotted] (8, 5.5    )  -- (8.5, 5);

\draw[dotted] (7, 7) -- (7.5, 6.5 );
\draw[dotted] (8.5  , 5.5 ) -- (9, 5);

\draw[dashed] (7.5, 7) -- (8, 6.5);
\draw[dashed] (8, 7) -- ( 8.5, 6.5);
\draw[dashed] (8, 6.5) -- (8.5, 6.5);

\draw   ( 9.5  ,  5.5   )  --  ( 10 , 5 ) ;
\draw   ( 9, 5.5 ) -- ( 9.5 , 5);
\draw (9.5 , 5.5) -- ( 9, 5.5);

\draw[dotted] (8.5, 7) -- (9, 6.5);
\draw[dotted] (10 , 5.5 ) -- (10.5, 5);

\draw[dotted] (9, 7) -- (9.5,  6.5);
\draw[dotted] (10.5  , 5.5 ) -- (11, 5);

\node at (11.7, 6) { \large $ + \, \dots $};

\draw (0.5, 3) -- (1.3, 2.2);
\draw (0.5, 3) -- (1, 3);
\draw (1, 3) -- (1.5, 2.5);
\draw (1.5, 2.5) -- (2.5, 2.5);
\draw (2.5, 2.5) -- (2.8 , 2.2);
\draw (1.3, 2.2) -- (2.8, 2.2);

\draw (4.5, 3) -- (5.3, 2.2);
\draw (4.5, 3) -- (5, 3);
\draw (5, 3) -- (5.5, 2.5);
\draw (5.5, 2.5) -- (6.5, 2.5);

\draw (6.5, 2.5) -- ( 6.2, 2.8);
\draw (6.2, 2.8) -- (6, 2.8);
\draw[dotted] (6, 2.8) -- ( 5.7, 2.8);

\draw (5.3, 2.2) -- (6.8, 2.2);

\draw (6.8, 2.2) -- (7.1, 1.9);
\draw (7.1, 1.9) -- (6.8, 1.9);
\draw[dotted] (6.8, 1.9) -- (6.5, 1.9);

\draw (8.5, 3) -- (9.3, 2.2);
\draw (8.5, 3) -- (9, 3);
\draw (9, 3) -- (9.5, 2.5);
\draw (9.5, 2.5) -- (10.5, 2.5);

\draw (10.5, 2.5) -- ( 10.2, 2.8);
\draw (10.2, 2.8) -- (10, 2.8);
\draw[dotted] (10, 2.8) -- ( 9.7, 2.8);

\draw (9.3, 2.2) -- (9.3, 1.5);
\draw (9.3, 1.5) -- ( 9, 1.8) ;
\draw (9, 1.8) -- (9, 2);
\draw[dotted] (9, 2) -- (9, 2.2);

\draw[-implies,  double equal sign distance] (7.2,4.7) -- (5.5,4.3);

\node at (  0, 3.5   ) {\large a)};
\node at (  4, 3.5   ) {\large b)};
\node at (  8, 3.5   ) {\large c)};

\end{tikzpicture}

\caption{\textit{A pictorial representation of the argument given in the main text. We perform the integral over the group elements living on the edges. Only one of these integrals is explicitly written out with the rest being represented by the $\dots$ under the integral sign. As represented in Figure \ref{fig2}, after performing the integral over the group, the edges carrying that group element and its inverse are cut into half-edges with these halves linked up in pairs in identical ways on the two sides of the cut. In the figure, we focus on two edges represented by solid lines with arrow whose half-edges will be coupled. We focus on only one coupled pair, denoted by a solid line. The dashed one stands for the other one. The dotted lines stand for the other copies of the same edge carrying $U$ and $U^\dagger$ which, after integration, are of course coupled among themselves as well, which is indicated by the break in the lines. The $+ \, \dots$ stands for all the other terms in the sum in (\ref{eqnarray:integral}) which correspond to a different coupling of the half-edges. Also, note that since our concern here is the factors of $N$ obtained by following the coupling of edges, we haven't indicated either the factors of $\frac{1}{N}$ coming from the definition of $t(U)$, or those on the right hand side of (\ref{eqnarray:integral}). Now, following the two singled out coupled half-edges, we will have one of three cases in the lower half of the figure. In case (a), they meet at the corner with two coupled half-edges forming a loop. Note that  this loop is constructed out of four half-edges. Also note that the two plaquettes from which these four half-edges came must actually be the same plaquette, as two of their edges coincide. In cases (b) and (c), the original two half-edges are not coupled back immediately forming a loop, but instead, are coupled to other half edges. The difference between these two cases is that in (b), the two plaquettes from which these half-edges come coincide as in (a), while in (c) they don't. } }
\label{fig3}
\end{figure}

Since in total we have $8l$ half-edges, and since at least 4 are needed to form a single loop, the total number of loops is no more than $2l$. Thus, the total factor that they contribute is no more than $N^{2l}$. \newline

Putting everything together, we get that the dominant contribution asymptotically to our expression is of order $\frac{1}{N^l}$. This comes from those terms which, when represented graphically, contain only case (a) from Figure \ref{fig3}. How can the latter situation arise? It should be clear that it would only happen if we can group the plaquettes into pairs, with one copy in the pair oriented one way, while the other the opposite way, and then couple the half-edges as indicated in Figure \ref{fig4} because any other scenario would produce cases (b) or (c). \newline

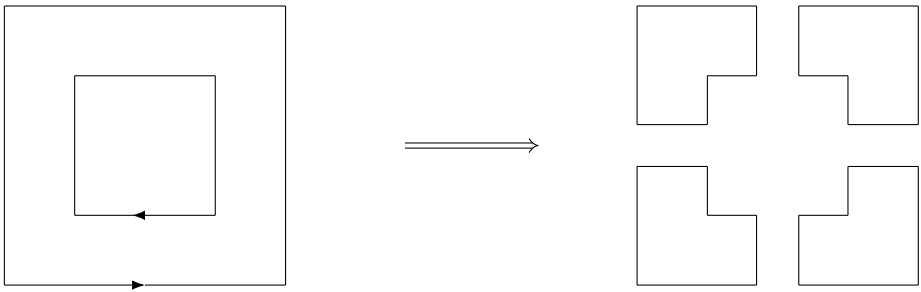
\begin{figure}[h]
\begin{tikzpicture}[scale = 0.925]

\draw[-{Latex}] (0, 0) -- (2, 0);
\draw (2,0) -- (4,0);
\draw (4, 0) -- (4, 4);
\draw (4, 4) -- (0, 4);
\draw (0, 0) -- (0, 4);

\draw[-{Latex[reversed]}] (1, 1) -- (2, 1);
\draw (2,1) -- (3,1);
\draw (3, 1) -- (3, 3);
\draw (3, 3) -- (1, 3);
\draw (1, 1) -- (1, 3);

\draw[-implies,  double equal sign distance] (5.7,2) -- (7.6,2);

\draw (9, 0) -- (10.7, 0);
\draw (11.3, 0) -- (13, 0);

\draw (10.7, 0) -- (10.7, 1);
\draw (11.3, 0) -- (11.3, 1);

\draw (13, 0) -- (13, 1.7);
\draw (13, 2.3) -- (13, 4);
\draw (12, 1.7) -- (13, 1.7);
 \draw (12, 2.3) -- (13, 2.3);

\draw (10, 1) -- (10.7, 1);
\draw (11.3, 1) -- (12, 1);

\draw (12, 1) -- (12, 1.7);
\draw (12, 2.3) -- (12, 3);

\draw (13, 4) -- (11.3, 4);
\draw (9, 4) -- (10.7, 4);

\draw (9, 0) -- (9, 1.7);
\draw (9, 2.3) -- (9, 4);

\draw (12, 3) -- (11.3, 3);
\draw (10.7, 3) -- (10, 3);

\draw (10, 1) -- (10, 1.7);
\draw (10, 2.3) -- (10, 3);

\draw (10, 1.7) -- (9, 1.7);
\draw (10, 2.3) -- (9, 2.3);
\draw (10.7, 3) -- (10.7, 4);
\draw (11.3, 3) -- (11.3, 4);

\end{tikzpicture}
\caption{\textit{This is a graphical representation of the only way two plaquettes can be coupled together in order to solely produce instances of case (a) from Figure \ref{fig3}. To see this, suppose the half-edges are coupled to produce case (a) for e.g., the bottom left corner. It follows that the other two half-edges coming from the bottom edge must be coupled together as well. Following them around the bottom right corner, we conclude that the corresponding two half-edges coming from the right edge must also be coupled together, for otherwise, we would have either case (b) or (c) at the bottom right corner. Proceeding like this around the plaquette, we conclude that the two half-edges coming from these two plaquettes must only couple among themselves, giving the picture on the right.  }}
\label{fig4}
\end{figure}

Since all such terms evaluate to the same value $\frac{1}{N^l}$, it remains to find the total number of such terms. In other words, we must determine the total number of ways we can group $l$ plaquettes into oppositely oriented pairs. For an odd $l$, this is impossible, and thus we do indeed have that the $l$-th moment of $\rho_N(t)$ is $O ( \frac{1}{N^{l+1}}) = \frac{m_l}{N_l}  +O ( \frac{1}{N^{l+1}}) $. \newline

For an even $l$, the number of such terms is equal to

\ben
(l-1)!!  \times 2 ^{\frac{l}{2}}  \times K^{\frac{l}{2}} \times  {D \choose 2} ^{\frac{l}{2}},
\een 

where the first factor comes from the number of ways one can subdivide a set of $l$ elements into pairs. The second is a consequence of making a choice of orientation for each plaquette in each pair (each plaquette enters with both orientations as is clear from (\ref{eq:action})). The third is the number of sites where the two paired plaquettes can be located. The fourth is the choice of the coordinate plane in space to which the palquettes are parallel. \newline

Recalling now that $t$ has a factor of $K$ in the denominator as well as a factor of $\frac{1}{2}$ (coming from the real part), we get that for even $l$, (\ref{eq:moment}) is equal to

\ben
\frac{1}{(2K)^l}  \times (l-1)!! \bigg [  2 K {D \choose 2}   \bigg ] ^{ \frac{l}{2}  } \times \frac{1}{N^l} + O(\frac{1}{N^{l+1}}) =   \frac{1}{N^l}   \frac{  (l-1)!! }{  \Big (  \frac{2K }{ {D \choose 2}  } \Big )^{\frac{l}{2}} } + O( \frac{1}{N^{l+1}}), 
\een

which is exactly what we want, thus completing the proof.
\end{proof}

The theorem above states essentially that  

\be
\label{eq:asymptotic}
d \rho_N[t] \sim \sqrt{\frac{2 KN^2}{\pi D (D-1)}}   e^{- \frac{2KN^2}{D(D-1)} t^2   } dt \qquad \txt{as} \qquad N \to \infty.
\ee

Let us thus replace $d \rho_N[t]$ in (\ref{eq:partition}) with the asymptotic Gaussian. Then, we get the following expression

\be
\label{eq:partition2}
\sqrt{\frac{2KN^2}{\pi D (D-1)}}  \int e^{ \frac{2 N^2 K}{\lambda} t   }  e^{ - \frac{2KN^2}{D(D-1)} t^2    } dt  \sim     e^{   \frac{ K N^2 D (D-1)  }{ 2 \lambda^2  }   }  .
\ee

How reliable is this result as an asymptotic of the partition function? In other words, how reliable is the replacement (\ref{eq:asymptotic}) in (\ref{eq:partition})? Comparing (\ref{eq:partition2}) to the known exact asymptotic for $D = 2$ obtained in \cite{grosswitten}, we see that the answer above coincides with the correct expression for $\lambda$ sufficiently large ($\lambda \geq 2$). But, it does not match with the correct behaviour for $0< \lambda \leq 2$; notably it does not detect the present of the phase transition. \newline

The reason for this is not hard to see. While the theorem above justifies the replacement (\ref{eq:asymptotic}) in an integral of the form $\int f(t) d\rho_N[t]$, in our case, the integrated function not only depends on $N$, it depends on it exponentially and with the same weight in $N$. Put differently, there are two processes which compete to decide the asymptotic of (\ref{eq:partition}). One of them is the concentration of measure phenomenon which is manifested by the Gaussian, the second term in the integrand in (\ref{eq:partition2}), trying to set $t = 0$. The other process is the minimization of the action contained in the first term in the integrand in (\ref{eq:partition2}), which is aiming to make the action as small as possible. This corresponds to pushing $t$ to its maximum value $t = {D \choose 2}$.\footnote{It is not hard to see from (\ref{eq:action}) that the maximum value of $t$, attained when e.g. all the $U$'s are equal to identity, is ${D \choose 2}$.} Which of these processes wins is decided by their relative weights in the exponents, i.e. by $\lambda$. It is thus clear that it is precisely when $\lambda$ grows, that the action term becomes less relevant than the measure one and vice versa. We can thus expect that the asymptotic replacement (\ref{eq:asymptotic}) is reliable in the limit $\lambda \to \infty$, which is of course the strong-coupling limit, and is unreliable in the limit $\lambda \to 0$. Alas it is exactly the latter limit which is the physically relevant one. Nonetheless, as is often the case with asymptotic analysis, the approximations often work much better than expected. This is exemplified above in the $D = 2$ case, where $\lambda \geq 2$ is already ``large'' and the asymptotic agrees precisely with the exact answer.\newline

Let us finish with three remarks. First, note that in view of the discussion above, the asymptotic replacement (\ref{eq:asymptotic}) can be thought of as an alternative derivation of the lowest order contribution to the strong coupling expansion. Now, how can we generate the higher orders? To do so, we should note that strictly speaking, (\ref{eq:asymptotic}) does not follow from the theorem. Rather, one can only conclude that 

\ben
d \rho_N[t] \sim \alpha_N \exp \bigg [ - N^2\Big ( c_2(N) t^2 + c_3(N) t^3 + c_4(N) t^4 + \dots        \Big ) \bigg ] dt
\een

where $c_2, c_3, c_4, \dots$ can each be expanded in powers of $\frac{1}{N}$, and $\alpha_N$ is chosen to normalize the expression on the right hand side to 1. To determine these higher order corrections, one would need to consider the next orders of corrections to the moments calculated above, and equate them with the asymptotic expansion of the moments of the expression on the right above. This can be done systematically, as at any given order, only finitely many new coefficients would be relevant; However, the graphical calculations rapidly get quite unwieldy. Since there is already a well-developed and understood strong coupling expansion, there is little incentive to develop these calculations in this setting.  \newline

The second remark we would like to make concerns the other limit $\lambda \to 0$. Can one use (\ref{eq:partition}) to calculate the asymptotic to $Z$ in that limit? Alas, as will be clear below, it does not look possible to go beyond the lowest order approximation. However, the argument and its conclusion are quite simple and instructive, so they are worth presenting. \newline

Note that if $\lambda \to 0$, we expect the ``action'' term to be the dominant one. Thus, we expect the main contribution to come from the neighbourhood of the point that minimizes the action. After taking the pushforward, this corresponds to the maximum possible value of $t$ which, as mentioned above, is ${D \choose 2}$. How does $d\rho_N[t]$ behave in the neighbourhood of this point? The exact answer would require finding the Hausdorff measure induced by the Haar measure on the level sets of the function defined in (\ref{eq:action}). This is a difficult undertaking which amounts to essentially obtaining exactly the partition function as a function $N, \lambda$ and $K$. However, if one writes the group elements $U$ as $e^X$, where $X$ is in the Lie algebra $su(N)$, and keeps only the lowest order terms in the expansion of the exponential, the problem simplifies considerably. This is because the equation $t(U) = \txt{constant}$ becomes a homogeneous one of order 2 if one only keeps the terms up to the quadratic order.\footnote{Strictly speaking, in our setting, the lowest nonvanishing order after the constant term is not the quadratic but the linear one. However, the linear contribution does not scale with $N$ while the quadratic one does, and thus, is the dominant one. Alternatively, one can recall the standard fact that the large $N$ limits of $U(N)$ and $SU(N)$ theories coincide and revert to the $SU(N)$ case if desired. } Now, if we count the degrees of freedom, taking into account gauge invariance, we see that $d \rho_N[t]$ should scale as $t^{ \frac{ (D-1) (N^2 - 1) K    }{2}     }$ in the neighbourhood of $t = {D \choose 2}$. Therefore, let us replace $d \rho_N[t]$ with this expression in (\ref{eq:partition}). We get

\besn
\int e^{ \frac{  2 N^2 K t  }{ \lambda }      } t^{ \frac{ (D-1) (N^2 - 1) K    }{2}     } dt  \sim  \int e^{N^2 K ( \frac{ 2 t  }{ \lambda }       +  \frac{(D-1)}{2} \ln t )  }    dt .
\eesn

The exponent is increasing as a function of $t$. Thus, the dominant contribution comes from the upper endpoint of the support of the integral which is ${D \choose 2}$. Therefore, the asymptotic free energy is given by $ N^2 K \Big ( \frac{ 2 {D \choose 2}  }{ \lambda   }   + \frac{D-1}{2} \ln {D \choose 2}   \Big )$. Comparing this with the expression obtained in \cite{grosswitten}, we see that we do recover the lowest order contribution to the expression there (which is $\frac{2N^2 K}{\lambda}$). Unfortunately, the method above does not seem to be adapted to go beyond the lowest order. \newline

The final remark we would like to make is to note that the reason that our results are only reliable in the large coupling limit is because, figuratively speaking, the measure and the action work against each other. One may wonder if there are other settings where they won't be in such a blatant opposition, and may actually work together to make the asymptotic more reliable in the continuum limit instead of less. It turns out that this is indeed the case for the principal chiral model, as is reported elsewhere \cite{pcm}. \newline

\textbf{Acknowledgments:} The author would like to thank J. Merhej for reading a preliminary version of this paper and for the numerous comments which have greatly improved its readability.

\texttt{{\footnotesize Department of Mathematics, American University of Beirut, Beirut, Lebanon.}
}\\ \texttt{\footnotesize{Email address}} : \textbf{\footnotesize{tamer.tlas@aub.edu.lb}}

\end{document}